\date{}
\def\Gauss{{\mathrm{N}}}
\def\mr{\mathrm}
\def\LS{{\textbf{LASSO}}}
\def\HS{{\textbf{HS}}}
\def\T{\mathrm{\scriptscriptstyle{T}}}
\newtheorem{theorem}{Theorem}[section]
\newtheorem{proposition}[theorem]{Proposition}
\def\Gauss{{\mathrm{N}}}
\def\T{\mathrm{\scriptscriptstyle{T}}}
\def\mr{\mathrm}
\def\LS{{LASSO}}
\def\HS{{HS}}
\newcommand*{\email}[1]{\href{mailto:#1}{\nolinkurl{#1}} }
\begin{document}
\title{Fast sampling with Gaussian scale-mixture priors in high-dimensional regression}
\author{Anirban Bhattacharya \\ \href{mailto:anirbanb@stat.tamu.edu}{anirbanb@stat.tamu.edu} 
\and Antik Chakraborty  \\ \href{mailto:antik@stat.tamu.edu}{antik@stat.tamu.edu}, 
\and Bani K. Mallick  \\ \href{mailto:bmallick@stat.tamu.edu}{bmallick@stat.tamu.edu}}

\baselineskip=18pt

\maketitle
\begin{center}
Department of Statistics

Texas A\&M University

College Station, Texas, USA
\end{center}
\begin{abstract}
\baselineskip=16pt
We propose an efficient way to sample from a class of structured multivariate Gaussian distributions which routinely arise as conditional posteriors 
of model parameters that are assigned a conditionally Gaussian prior. The proposed algorithm only requires matrix operations in the form of matrix 
multiplications and linear system solutions. We exhibit that the computational complexity of the proposed algorithm grows linearly with the dimension 
unlike existing algorithms relying on Cholesky factorizations with cubic orders of complexity. The algorithm should be broadly applicable in settings 
where Gaussian scale mixture priors are used on high dimensional model parameters. We provide an illustration through posterior sampling in a high 
dimensional regression setting with a horseshoe prior \citep{carvalho2010horseshoe} on the vector of regression coefficients. 
\end{abstract}
\vspace{30mm}

{\textsc{Keywords}}:Bayesian; Gaussian scale mixture; Global-local prior; High dimensional; Scalable; Shrinkage; Sparsity.

\pagebreak
\section{Introduction}
Continuous shrinkage priors have recently received significant attention as a mechanism to induce approximate sparsity in high-dimensional 
parameters. Such prior distributions can mostly be expressed as global-local scale mixtures of Gaussians \citep{polson2010shrink,Bhattacharya_2014}.
These global-local priors \citep{polson2010shrink} aim to shrink noise coefficients while retaining any signal, 
thereby providing an approximation to the operating characteristics of discrete mixture priors \citep{george1997approaches,scott2010bayes}, which allow a subset of the parameters to be exactly zero. 

A major attraction of global-local priors has been computational efficiency and simplicity. Posterior inference poses a stiff challenge for 
discrete mixture priors in moderate to high-dimensional settings,  but the scale-mixture 
representation of global-local priors allows parameters to be updated in blocks via a fairly automatic Gibbs sampler in a wide variety of problems. These include regression \citep{caron2008sparse,armagan2011generalized}, variable selection \citep{hahn2015decoupling}, wavelet denoising \citep{polson2010shrink}, factor models and covariance 
estimation \citep{bhattacharya2011sparse,debdeep2013}, and time series \citep{durante2014locally}. Rapid mixing and convergence of the resulting Gibbs sampler for specific classes of priors has been recently established in the high-dimensional regression context by \cite{khare2013geometric} and \cite{pal2014geometric}.  Moreover, recent results suggest that a subclass of global-local priors can 
achieve the same minimax rates of posterior concentration as the discrete mixture priors in high-dimensional estimation problems \citep{Bhattacharya_2014,van2014horseshoe,debdeep2013}. 

In this article, we focus on computational aspects of global-local priors in the high-dimensional linear regression setting 
\begin{align}\label{eq:lin_reg}
y = X \beta + \epsilon, \quad \epsilon \sim \mbox{N}(0, \sigma^2 \mr I_n),
\end{align}
where $X \in \Re^{n \times p}$ is a $n \times p$ matrix of covariates with the number of variables $p$ potentially much larger than the sample size $n$. 
A global-local prior on $\beta$ assumes 
that
\begin{align}
&\beta_j \mid \lambda_j, \tau, \sigma \sim \mbox{N}(0, \lambda_j^2 \tau^2 \sigma^2), \quad (j = 1, \ldots, p), \label{eq:gl_1} \\
& \lambda_j \sim f, \quad (j =1, \ldots, p) \label{eq:gl_2} \\
& \tau \sim g, \quad \sigma \sim h, \label{eq:gl_3}
\end{align}
where $f, g$ and $h$ are densities supported on $(0, \infty)$. The $\lambda_j$s are usually referred to as local scale parameters 
while $\tau$ is a global scale parameter. Different choices of $f$ and $g$ lead to different classes of priors. For instance, a half-Cauchy 
distribution for $f$ and $g$ leads to the horseshoe prior of \citet{carvalho2010horseshoe}. In the $p \gg n$ setting where most entries of $\beta$ are assumed to be zero or close to zero, the choices of $f$ and $g$ play a key role in controlling the effective sparsity and concentration of the prior and posterior \citep{polson2010shrink,debdeep2013}.


Exploiting the scale-mixture representation \ref{eq:gl_1}, it is straightforward in principle to formulate a Gibbs sampler. 
The conditional posterior of $\beta$ given $\lambda = (\lambda_1, \ldots, \lambda_p)^{\T}, \tau$ and $\sigma$ is given by 
\begin{align}\label{eq:beta_post}
\beta \mid y, \lambda, \tau, \sigma \sim \Gauss(A^{-1} X^{\T} y,\sigma^2 A^{-1}), \quad A = (X^{\T} X +\Lambda_*^{-1}), 
\quad \Lambda_*=\tau^2 \mbox{diag}(\lambda_1^2,...,\lambda_p^2).
\end{align}
Further, the $p$ local scale parameters $\lambda_j$ have conditionally independent posteriors and hence $\lambda = (\lambda_1, \ldots, \lambda_p)^{\T}$
can be updated in a block by slice sampling
\citep{polson2014bayesian} if conditional posteriors are unavailable in closed form. However, unless care is exercised, sampling from \eqref{eq:beta_post} can be expensive for large values of $p$. Existing algorithms \citep{rue2001fast} to sample from \eqref{eq:beta_post} face a bottleneck for large $p$ to perform a Cholesky decomposition of $A$ at each iteration. One cannot resort to 
precomputing the Cholesky factors since the matrix $\Lambda^*$ in \eqref{eq:beta_post} changes from at each iteration.
%
In this article, 
we present an exact sampling algorithm for Gaussian distributions \eqref{eq:beta_post} which relies on data augmentation. We show that the computational complexity of the algorithm scales linearly in $p$. 

\section{The algorithm}\label{sec:gen}

Suppose we aim to sample from $\Gauss_p(\mu,\Sigma)$, with 
\begin{align}\label{eq:gen_setting}
\Sigma=(\Phi^{\T}\Phi+D^{-1})^{-1}, \quad \mu=\Sigma \Phi^{\T}\alpha,
\end{align}
where $D \in \Re^{p \times p}$ is symmetric positive definite, $\Phi \in \Re^{n \times p}$, and $\alpha \in \Re^{n \times 1}$; \eqref{eq:beta_post} is a special case of \eqref{eq:gen_setting} with $\Phi = X/\sigma$, $D = \sigma^2 \Lambda_*$ and $\alpha = 
y/\sigma$. A similar sampling problem arises in all the applications mentioned in \S 1, and the proposed approach can be used in such settings. In the sequel, we do not require $D$ to be diagonal, however we assume that $D^{-1}$ is easy to calculate and it is 
straightforward to sample from $\Gauss(0, D)$. This is the case, for example, if $D$ corresponds to the covariance matrix of an $\mbox{AR}(q)$ 
process or a Gaussian Markov random field.

Letting $Q = \Sigma^{-1} = (\Phi^{\T} \Phi + D^{-1})$ denote the precision, inverse covariance, matrix and $b = \Phi^{\T} \alpha$, we can 
write $\mu = Q^{-1} b$. \cite{rue2001fast} proposed an efficient algorithm to sample from a $\mbox{N}(Q^{-1}b, Q^{-1})$ distribution that avoids explicitly calculating the inverse of $Q$, which is computationally expensive and numerically unstable. Instead, the 
algorithm in \S 3.1.2. of \cite{rue2001fast} performs a Cholesky decomposition of $Q$ and uses the Cholesky factor to solve a series of linear 
systems to arrive at a sample from the desired Gaussian distribution.  
The original motivation was to efficiently sample from Gaussian Markov random fields where $Q$ 
has a banded structure so that the Cholesky factor and the subsequent linear system solvers can be computed efficiently. Since $Q = (\Phi^{\T} \Phi + D^{-1})$ does not have any special structure in the present setting, the Cholesky factorization has complexity $O(p^3)$; see \S 4.2.3 of \cite{golub2012matrix}, which increasingly becomes prohibitive for large $p$.  
%
We present an 
alternative exact mechanism to sample from a Gaussian distribution with parameters as in \eqref{eq:gen_setting} below:
\begin{algorithm}[h!]
\caption{ {\bf Proposed algorithm} } \label{alg:al2}
\begin{tabbing}
   \enspace (i) Sample $u \sim \Gauss(0,D)$ and $\delta \sim \Gauss(0,\mr I_n)$ independently. \\
   \enspace (ii) Set $v=\Phi u+\delta$.\\
   \enspace (iii) Solve $(\Phi D \Phi^{\T}+\mr I_n)w = (\alpha - v)$.\\
   \enspace (iv) Set $\theta=u+D \Phi^{\T}w$.
\end{tabbing}
\vspace*{-12pt}
\end{algorithm}
\begin{proposition}\label{prop:samp}
Suppose $\theta$ is obtained by following Algorithm \ref{alg:al2}. Then, $\theta \sim \Gauss(\mu, \Sigma)$, where $\mu$ and $\Sigma$ 
are as in \eqref{eq:gen_setting}. 
\end{proposition}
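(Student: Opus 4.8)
The plan is to recognize $\theta$ as an affine image of a Gaussian vector, so that it is automatically Gaussian, and then to pin down its mean and covariance by a Gaussian-conditioning argument rather than by grinding through matrix algebra. As a preliminary I would note that positive-definiteness of $D$ makes $D^{-1}$, $\Sigma = (\Phi^{\T}\Phi + D^{-1})^{-1}$, and $(\Phi D \Phi^{\T} + I_n)^{-1}$ all well-defined, so step (iii) has the unique solution $w = (\Phi D\Phi^{\T} + I_n)^{-1}(\alpha - v)$, and substituting $v = \Phi u + \delta$ from step (ii) expresses $\theta$ as a fixed affine function of the jointly Gaussian pair $(u,\delta)$. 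Hence $\theta$ is Gaussian, and it suffices to identify its first two moments with $\mu$ and $\Sigma$.

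For the identification I would introduce the jointly Gaussian vector $(u, v)$ with $v = \Phi u + \delta$. Its density is proportional to $\exp\{-\tfrac12(u^{\T}D^{-1}u + \|v - \Phi u\|^2)\}$, so $(u,v)$ has mean zero and a block precision matrix with diagonal blocks $D^{-1} + \Phi^{\T}\Phi$ and $I_n$ and off-diagonal block $-\Phi^{\T}$. Reading the conditional law of $u$ given $v$ off this precision matrix gives, on one hand, $\mathrm{Cov}(u\mid v) = (D^{-1}+\Phi^{\T}\Phi)^{-1} = \Sigma$ and $E(u\mid v) = \Sigma\Phi^{\T}v$; on the other hand, the ordinary covariance formula gives $E(u\mid v) = \mathrm{Cov}(u,v)\mathrm{Cov}(v)^{-1}v = D\Phi^{\T}(\Phi D\Phi^{\T}+I_n)^{-1}v$. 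Equating these for all $v$ yields the identity $D\Phi^{\T}(\Phi D\Phi^{\T}+I_n)^{-1} = \Sigma\Phi^{\T}$, which is exactly the matrix acting on $\alpha - v$ in step (iv).

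With that identity in hand, step (iv) reads $\theta = u + \Sigma\Phi^{\T}(\alpha - v) = \big(u - E(u\mid v)\big) + \Sigma\Phi^{\T}\alpha$. The residual $r := u - E(u\mid v)$ is jointly Gaussian with $v$ and uncorrelated with it, hence independent of $v$, and $r \sim \Gauss(0,\mathrm{Cov}(u\mid v)) = \Gauss(0,\Sigma)$; meanwhile $\Sigma\Phi^{\T}\alpha = \mu$ is deterministic. Therefore $\theta = \mu + r \sim \Gauss(\mu,\Sigma)$, which is the claim.

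I expect the only real difficulty to be bookkeeping: justifying that the conditional mean and covariance can be read directly off the block precision matrix (equivalently, invoking the Woodbury/Schur-complement identity $\Sigma = D - D\Phi^{\T}(\Phi D\Phi^{\T}+I_n)^{-1}\Phi D$ together with the push-through identity above), and keeping track of the symmetry of $D$ and $\Phi D\Phi^{\T}+I_n$ so that transposes behave. Should the conditioning language feel indirect, the same result falls out of a direct computation: writing $M := D\Phi^{\T}(\Phi D\Phi^{\T}+I_n)^{-1}$ gives $\theta = (I_p - M\Phi)u - M\delta + M\alpha$, hence $E\theta = M\alpha$ and $\mathrm{Cov}(\theta) = (I_p - M\Phi)D(I_p-M\Phi)^{\T} + MM^{\T}$, and then the two matrix identities $M = \Sigma\Phi^{\T}$ and $(I_p - M\Phi)D = \Sigma$ collapse this to $E\theta = \Sigma\Phi^{\T}\alpha = \mu$ and $\mathrm{Cov}(\theta) = \Sigma - MM^{\T} + MM^{\T} = \Sigma$.
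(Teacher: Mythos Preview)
Your argument is correct. The conditioning route you take is a genuinely different, and somewhat more conceptual, packaging than the paper's main proof. The paper simply writes $\theta = u + D\Phi^{\T}(\Phi D\Phi^{\T}+\mr I_n)^{-1}(\alpha-v)$, notes $\mathrm{cov}(u,v)=D\Phi^{\T}$, and computes the mean and covariance directly, invoking the Sherman--Morrison--Woodbury identity twice to recognize the results as $\mu$ and $\Sigma$. Your approach instead derives the push-through identity $D\Phi^{\T}(\Phi D\Phi^{\T}+\mr I_n)^{-1}=\Sigma\Phi^{\T}$ by equating two expressions for $E(u\mid v)$, and then observes that $\theta$ is the conditional residual $u-E(u\mid v)$ shifted by the constant $\mu$; the Woodbury identity is replaced by the statement that the conditional covariance read off the joint precision is $\Sigma$. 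This buys you an explanation of \emph{why} the algorithm works (it is sampling the residual of $u$ after regressing on $v$), whereas the paper's main proof is a clean verification. The paper's Appendix gives a constructive proof via a block $L\Gamma L^{\T}$ factorization of $\mathrm{cov}(v,u)$, which is the Schur-complement version of your conditioning argument. Finally, your fallback direct computation with $M=D\Phi^{\T}(\Phi D\Phi^{\T}+\mr I_n)^{-1}$, $(I_p-M\Phi)D=\Sigma$ is essentially the paper's main proof written out in a slightly different order.
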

\begin{proof}
By the Sherman--Morrison--Woodbury identity \citep{hager1989updating} and some algebra, $\mu=D \Phi^{\T}(\Phi D \Phi^{\T}+\mr I_n)^{-1}\alpha$. By construction, $v \sim \Gauss(0,\Phi D \Phi^{\T}+\mr I_n)$. 
Combining steps (iii) and (iv) of Algorithm \ref{alg:al2}, we have $\theta=u+D\Phi^{\T}(\Phi D \Phi^{\T}+\mr I_n)^{-1}(\alpha-v)$. Hence $\theta$ has a normal distribution with mean $D \Phi^{\T}(\Phi D \Phi^{\T}+\mr I_n)^{-1}\alpha=\mu$. Since $\mathrm{cov}(u,v)=D\Phi^{\T}$, we obtain $\mathrm{cov}(\theta)=D-D\Phi^{\T}(\Phi D \Phi^{\T}+\mr I_n)^{-1}\Phi D=\Sigma$, again by the Sherman--Morrison--Woodbury identity. This completes the proof; a constructive proof is given in the Appendix.
\end{proof}
While Algorithm \ref{alg:al2} is valid for all $n$ and $p$, the 
computational gains are biggest when $p \gg n$ and $\mbox{N}(0, D)$ is easily sampled. 
Indeed, the primary motivation is to use data augmentation to cheaply sample $\zeta = (v^{\T}, u^{\T})^{\T} \in \Re^{n+p}$ and obtain a desired sample from \eqref{eq:gen_setting} via linear transformations and marginalization. When $D$ is diagonal, as in the case of global-local priors \eqref{eq:gl_1}, the complexity of the proposed algorithm
is $O(n^2 p)$; the proof uses standard results about complexity of matrix multiplications and linear system solutions;  see \S 1.3 \& 3.2 of \cite{golub2012matrix}. For non-sparse $D$, calculating $D \Phi^{\T}$ has a worst-case complexity of $O(np^2)$, which is the dominating term in the complexity 
calculations. 
In comparison to the $O(p^3)$ complexity of the competing algorithm in \cite{rue2001fast}, Algorithm \ref{alg:al2} therefore offers huge gains when $p \gg n$. For example, to run 6000 iterations of a Gibbs sampler for the horseshoe prior \citep{carvalho2010horseshoe} with sample size $n = 100$ in MATLAB on a INTEL(E5-2690) 2.9 GHz machine with 64 GB DDR3 memory, Algorithm \ref{alg:al2} takes roughly the same time as \cite{rue2001fast} when $p = 500$ but offers a speed-up factor of over 250 when $p = 5000$. MATLAB code for the above comparison and subsequent simulations is available at \url{ https://github.com/antik015/Fast-Sampling-of-Gaussian-Posteriors.git}. 

The first line of the proof implies that Algorithm \ref{alg:al2} outputs $\mu$ if one sets $u = 0, \delta = 0$ in step (i). 
The proof also indicates that the log-density of \eqref{eq:gen_setting} can be efficiently calculated at any $x \in \Re^p$. Indeed, since $\Sigma^{-1}$ is readily available, $x^{\T} \Sigma^{-1} x$ and $x^{\T} \Sigma^{-1} \mu$ are cheaply calculated and $\log |\Sigma^{-1}|$ can be calculated in $O(n^3)$ steps using the identity $|\mathrm{I}_r+AB|=|\mathrm{I}_s+BA|$ for $A \in \Re^{r \times s}, B \in \Re^{s \times r}$. Finally, from the proof, $\mu^\T \Sigma^{-1} \mu= \alpha^{\T}\Phi \Sigma^{-1} \Phi^{\T}\alpha=\alpha^{\T}\Phi \{D-D\Phi^{\T}(\Phi D \Phi^{\T}+\mathrm{I}_n)^{-1}\Phi D\}\Phi^{\T}\alpha=\alpha^{\T}\{ (\Phi D \Phi^{\T}+\mathrm{I}_n)^{-1}\Phi D \Phi^{\T} \} \alpha$, which can be calculated in $O(n^3)$ operations.

\section{Frequentist operating characteristics in high dimensions}\label{sec:app}

The proposed algorithm provides an opportunity to compare the frequentist operating characteristics of shrinkage priors in high-dimensional regression problems. We compare various aspects of the horseshoe prior \citep{carvalho2010horseshoe} to frequentist procedures and obtain highly promising results. 
We expect similar results for the Dirichlet--Laplace \citep{Bhattacharya_2014}, normal-gamma \citep{griffin2010inference} and generalized double-Pareto \citep{armagan2011generalized} priors, which we hope to report elsewhere.

\begin{figure}
\caption{Boxplots of $\ell_1$, $\ell_2$ and prediction error across 100 simulation replicates. $\mbox{HS}_{\mbox{me}}$ and $\mbox{HS}_{\mbox{m}}$ respectively denote posterior point wise median and mean for the horeshoe prior. True $\beta_0$ is $5$-sparse with non-zero entries $\pm \{1.5,1.75,2,2.25,2.5\}$. Top row: $\Sigma = \mathrm{I}_p$ (independent). Bottom row: $\Sigma_{jj}=1,\Sigma_{jj'}=0.5, j\ne j'$ (compound symmetry). }
\begin{flushleft}
\includegraphics[width=1\textwidth]{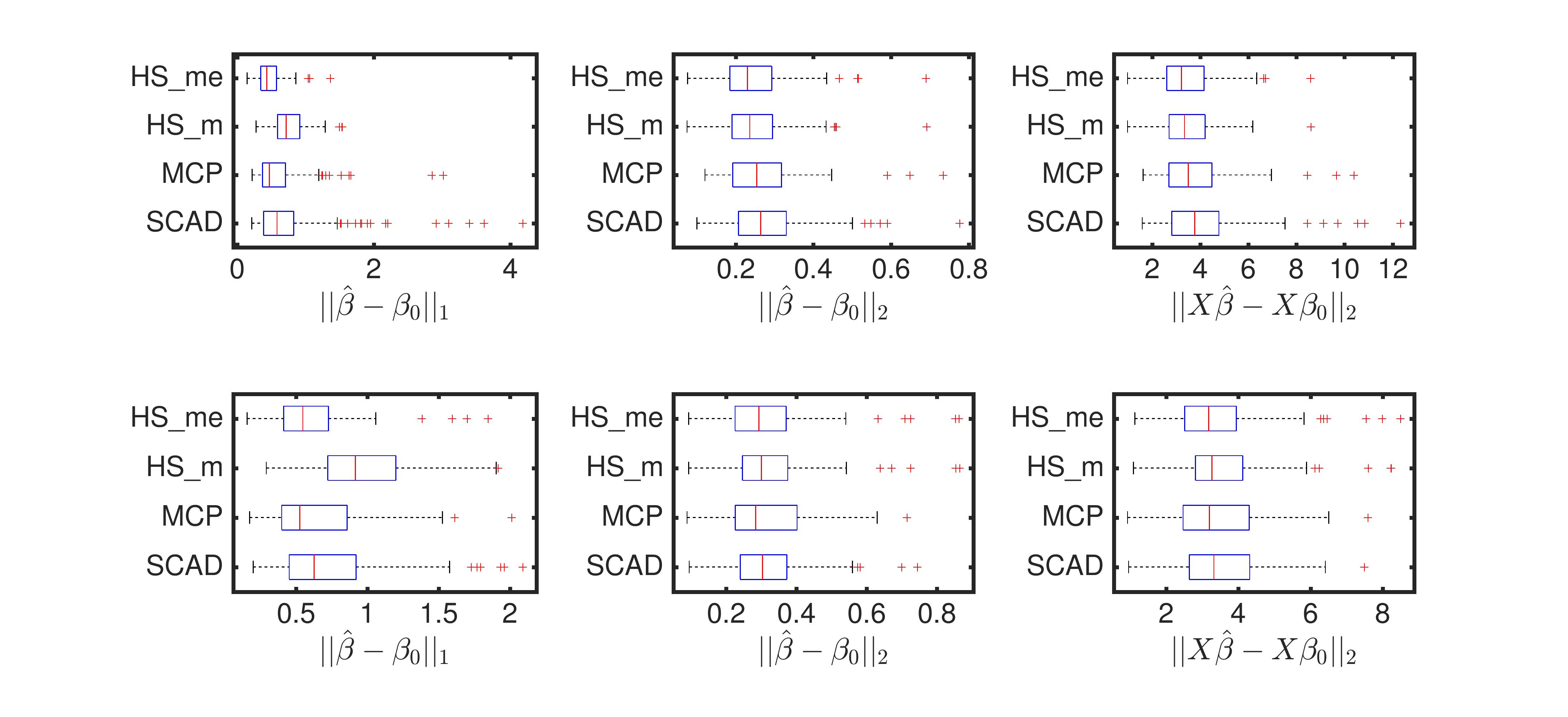}
\end{flushleft}
\label{fig:comp_1}
\end{figure}

\begin{figure}
\caption{Same setting as in Fig \ref{fig:comp_1}. True $\beta_0$ is $5$-sparse with non-zero entries $\pm \{0.75,1,1.25,1.5,1.75\}$.}
\begin{flushleft}
\includegraphics[width=1\textwidth]{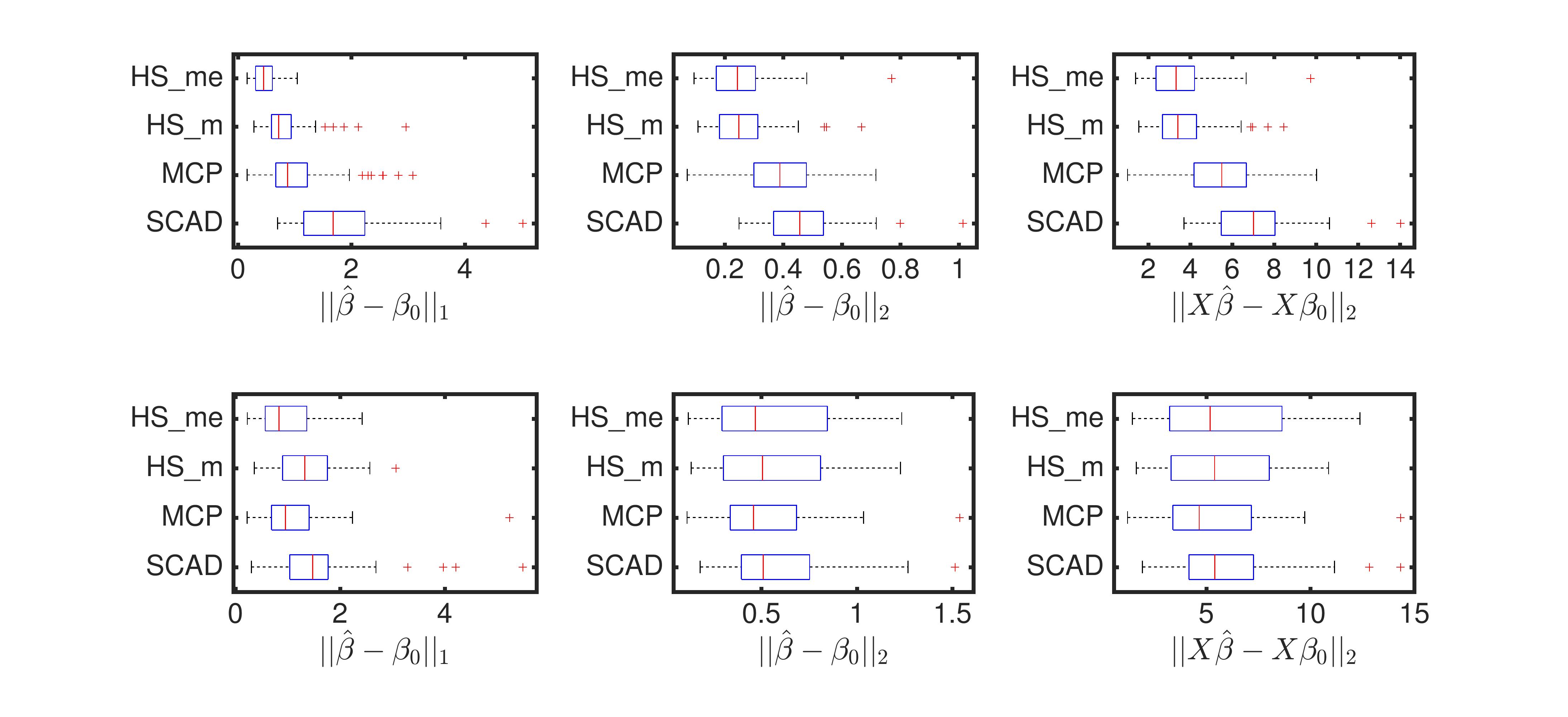}
\end{flushleft}
\label{fig:comp_2}
\end{figure}

We first report comparisons with smoothly clipped absolute deviation \citep{fan2001variable} and minimax concave penalty \citep{zhang2010nearly} methods. We considered model \eqref{eq:lin_reg} with $n = 200$, $p = 5000$ and $\sigma = 1.5$. Letting $x_i$ denote the $i$th row of $X$, the $x_i$s were independently generated from $\mbox{N}_p(0, \Sigma)$, with (i) $\Sigma = \mathrm{I}_p$ and (ii) $\Sigma_{jj}=1,\Sigma_{jj'}=0.5, j\ne j' =1,\ldots,p$, compound symmetry. The true $\beta_0$ had $5$ non-zero entries in all cases, with the non-zero entries having magnitude (a) $\{1.5,1.75,2,2.25,2.5\}$ and (b) $\{0.75,1,1.25,1.5,1.75\}$, multiplied by a random sign. For each case, we considered $100$ simulation replicates. The frequentist penalization approaches were implemented using the \texttt{R} package \texttt{ncvreg} via 10-fold cross-validation. 
For the horseshoe prior, we considered the posterior mean and the point-wise posterior median as point estimates. Figures 1 and 2 report boxplots for $\ell_1$, $||\hat{\beta} - \beta_0||_1$, $\ell_2$, $||\hat{\beta} - \beta_0||_2$, and prediction, $||X \hat{\beta} - X \beta_0||_2$, errors across the 100 replicates for the two signal strengths. The horseshoe prior is highly competitive across all simulation settings, in particular when the signal strength is weaker. An interesting observation is the somewhat superior performance of the point-wise median even under an $\ell_2$ loss; a similar fact has been observed about point mass mixture priors \citep{castilloneedles} in high dimensions. We repeated the simulation with $p = 2500$ with similar conclusions. Overall, out of the 24 settings, the horseshoe prior had the best average performance over the simulation replicates in 22 cases.

While there is now a huge literature on penalized point estimation, uncertainty characterization in $p > n$ settings has received attention only recently \citep{zhang2014confidence,van2014asymptotically,javanmard2014confidence}. Although Bayesian procedures provide an automatic characterization of uncertainty, the resulting credible intervals may not possess the correct frequentist coverage in nonparametric/high-dimensional problems \citep{szabo2015frequentist}. This led us to investigate the frequentist coverage of shrinkage priors in $p > n$ settings; it is trivial to obtain element-wise credible intervals for the $\beta_j$s from the posterior samples. We compared the horseshoe prior with \cite{van2014asymptotically} and \cite{javanmard2014confidence}, which can be used to obtain asymptotically optimal element wise confidence intervals for the $\beta_j$s. We considered a similar simulation scenario as before. We let $p \in \{500, 1000\}$, and considered a Toeplitz structure, $\Sigma_{jj'} = 0.9^{|j-j'|}$, for the covariate design \citep{van2014asymptotically} in addition to the independent and compound symmetry cases stated already. The first two rows of Table \ref{tab:cov} report the average coverage percentages and $100 \times$lengths of confidence intervals over $100$ simulation replicates, averaged over the $5$ signal variables. The last two rows report the same averaged over the $(p-5)$ noise variables. 

Table \ref{tab:cov} shows that the horseshoe has a superior performance. An attractive adaptive property of shrinkage priors emerges, where the lengths of the intervals automatically adapt between the signal and noise variables, maintaining the nominal coverage. The frequentist procedures seem to yield approximately equal sized intervals for the signals and noise variables. The default choice of the tuning parameter $\lambda \asymp (\log p/n)^{1/2}$ suggested in \cite{van2014asymptotically} seemed to provide substantially poorer coverage for the signal variables at the cost of improved coverage for the noise, and substantial tuning was required to arrive at the coverage probabilities reported. The default approach of \cite{javanmard2014confidence} produced better coverages for the signals compared to  \cite{van2014asymptotically}. The horseshoe and other shrinkage priors on the other hand are free of tuning parameters. The same procedure used for estimation automatically provides valid frequentist uncertainty characterization.

\begin{table}[!h]
\caption{Frequentist coverages (\%) and $100 \times $lengths of point wise $95\%$ intervals. Average coverages and lengths are reported after averaging across all signal variables (rows 1 and 2) and noise variables (rows 3 and 4). Subscripts denote $100 \times $standard errors for coverages. LASSO and SS respectively stand for the methods in \cite{van2014asymptotically} and \cite{javanmard2014confidence}. The intervals for the horseshoe (HS) are the symmetric posterior credible intervals.}
\huge
\begin{flushleft} 
\scalebox{0.34}{
\begin{tabular}{ccccccccccccccccccc}\toprule
{p} & \multicolumn{9}{c}{500} & \multicolumn{9}{c}{1000}\\
\cmidrule(lr){1-1} \cmidrule(lr){2-10}  \cmidrule(lr){10-19} \\
{Design} &\multicolumn{3}{c}{Independent} & \multicolumn{3}{c}{Comp Symm} & \multicolumn{3}{c}{Toeplitz} & \multicolumn{3}{c}{Independent} & \multicolumn{3}{c}{Comp Symm} & \multicolumn{3}{c}{Toeplitz} \\
\cmidrule(lr){1-1} \cmidrule(lr){2-4}  \cmidrule(lr){5-7}  \cmidrule(lr){8-10}  \cmidrule(lr){11-13}  \cmidrule(lr){14-16}  \cmidrule(lr){17-19} \\
& \HS & \LS & \SS & \HS & \LS & \SS & \HS & \LS & \SS & \HS & \LS & \SS & \HS & \LS & \SS & \HS & \LS & \SS\\
\cmidrule(lr){1-1} \cmidrule(lr){2-19}\\
Signal Coverage & $93_{1.0}$ & $75_{12.0}$ &  $82_{3.7}$ & $95_{0.9}$ & $73_{4.0}$ & $80_{4.0}$ & $94_{4.0}$ & $80_{7.0}$ & $79_{5.6}$ & $94_{2.0}$ & $78_{12.0}$ & $85_{5.1}$ &$94_{1.0}$ & $77_{2.0}$ & $82_{7.4}$ & $95_{1.0}$ & $76_{3.0}$ & $80_{8.3}$\\ \hline
Signal Length & $42$ & $46$ & $41$ & $85$ & $71$ & $75$ & $86$ & $79$ & $74$ & $39$ & $41$ & $42$ & $82$ & $76$ & $77$ & $105$ & $96$ & $95$\\ \hline 

Noise Coverage & $100_{0.0}$ & $99_{0.8}$ & $99_{1.0}$ & $100_{0.0}$ & $98_{1.0}$ & $99_{0.8}$ & $98_{1}$ & $98_{1.0}$ & $99_{0.6}$ & $99_{0.0}$ & $99_{1.0}$ & $98_{0.9}$ & $100_{0.0}$ & $99_{1.0}$ & $99_{0.1}$ & $100_{0.0}$ & $99_{1.0}$ & $99_{0.2}$\\ \hline
Noise Length & $2$ & $43$ & $40$ & $4$ & $69$ & $73$ & $5$ & $78$ & $73$ & $0.6$ & $042$ & $41$ & $0.7$ & $76$ & $77$ & $0.3$ & $98$ & $93$\\ 
\bottomrule
\end{tabular}
}  
\end{flushleft}
\label{tab:cov}
\end{table}

\section{Discussion}

Our numerical results warrant additional numerical and theoretical investigations into properties of shrinkage priors in high dimensions. The proposed algorithm can be used for essentially all the shrinkage priors in the literature and should prove useful in an exhaustive comparison of existing priors. Its scope extends well beyond linear regression. For example, extensions to logistic and probit regression are immediate using standard data augmentation tricks \citep{albert1993bayesian,holmes2006bayesian}. Multivariate regression problems where one has a matrix of regression coefficients can be handled by block updating the vectorized coefficient matrix ; even if $p < n$, the number of regression coefficients may be large if the dimension of the response if moderate. Shrinkage priors have been used as a prior of factor loadings in \cite{bhattacharya2011sparse}. While \cite{bhattacharya2011sparse} update the $p > n$ rows of the factor loadings independently, exploiting the assumption of independence in the idiyosyncratic components, their algorithm does not  
extend to approximate factor models, where the idiyosyncratic errors are dependent. The proposed algorithm can be adapted to such situations by block updating the vectorized loadings. Finally, we envision applications in high-dimensional additive models where each of a large number of functions is expanded in a basis, and the basis coefficients are updated in a block. 



\appendix
\section*{Appendix}

Here we give a constructive argument which leads to the poroposed algorithm. By the Sherman--Morrison--Woodbury formula (Hager, 1989) and some algebra we have, 
{\small \begin{align}\label{eq:wood}
\Sigma = (\Phi^{\T}\Phi+D^{-1})^{-1} = D -D \Phi^{\T}(\Phi D \Phi^{\T}+\mr I_n)^{-1}\Phi D, \quad \mu = D \Phi^{\T}(\Phi D \Phi^{\T}+\mr I_n)^{-1}\alpha. 
\end{align}}
However, the above identity for $\Sigma$ on its own does not immediately help us to sample from $\mbox{N}(0, \Sigma)$. Here is where the data augmentation idea is useful. Define $\zeta = (v^{\T}, u^{\T})^{\T} \in \Re^{n+p}$, where $u$ and $v$ are defined in step (i) and (ii) of the algorithm. Clearly, $\zeta$ has a mean zero multivariate normal distribution with covariance $
\Omega  = \left(\begin{matrix}
      P & S\\
      S^{\T} & R\\  
     \end{matrix}\right)
$, where $P = (\Phi D \Phi^{\T} + \mr I_n), S = \Phi D$ and $R = D$. The following identity is easily verified:
\begin{equation}\label{eq:mat_id}
\underbrace{ \left(\begin{matrix}
      P & S\\
      S^{\T} & R\\  
     \end{matrix}\right) }_{\Omega}
=
\underbrace{ \left(\begin{matrix}
      \mr I_n & 0\\
       S^{\T}P^{-1} & \mr I_p \\
     \end{matrix}\right)  }_{L}
\underbrace{ \left(\begin{matrix}
      P & 0\\
      0 & R-S^{\T}P^{-1}S \\    
     \end{matrix}\right) }_{\Gamma}
\underbrace{ \left(\begin{matrix}
      \mr I_n & P^{-1}S\\
      0 & \mr I_p \\   
     \end{matrix}\right) }_{L^{\T}}.
\end{equation}
Note that $\Gamma$ is block diagonal, with the lower $p \times p$ block of $\Gamma$ given by $R-S^{\T}P^{-1}S$ equaling $\Sigma$ by \eqref{eq:wood}. Further, $L$ is invertible, with $L^{-1}=\left( \begin{matrix} \mr I_n 
 & 0 \\ -S^{\T}P^{-1} & \mr I_p \\ \end{matrix} \right)$ which is easily derived since $L$ is lower triangular. Therefore, $\Gamma  = L^{-1} \Omega 
(L^{-1})^{\T}$. 

We already have $\zeta$ which is a sample from $\mbox{N}(0, \Omega)$. Defining $\zeta_* = L^{-1} \zeta$, clearly $\zeta_* \sim \Gauss(0, \Gamma)$. Thus if we collect the last $p$ entries of $\zeta^*$, they give us a sample from $\mbox{N}(0, \Sigma)$. Some further algebra produces the algorithm.

\bibliography{shrinkage_refs}
\bibliographystyle{biometrika}

\end{document}